\documentclass[%
reprint,
superscriptaddress,
amsmath,amssymb,
aps,
pra,
]{revtex4-2}

\usepackage{graphicx}

\usepackage{dcolumn}
\usepackage{bm}
\usepackage[export]{adjustbox}
\usepackage{soul}
\usepackage{cancel}

\usepackage{mathtools} 
\DeclarePairedDelimiter\bra{\langle}{\rvert}
\DeclarePairedDelimiter\ket{\lvert}{\rangle}
\DeclarePairedDelimiterX\braket[2]{\langle}{\rangle}{#1 \delimsize\vert #2}

\usepackage[normalem]{ulem}

\usepackage[utf8]{inputenc}
\usepackage[T1]{fontenc}

\usepackage{mathrsfs}
\usepackage{dsfont}

\usepackage{amssymb}
\usepackage{amsmath}
\usepackage{amsfonts}
\usepackage{amsthm}

\usepackage{hyperref}
\hypersetup{pdfpagemode=UseNone}

\newtheorem{theorem}{Theorem}

\newtheorem{proposition}[theorem]{Proposition}

\newcounter{rem}
\def\tr{{\rm tr}}

\def\rho{{\varrho}}

\def\textbf#1{{\bf #1}}

\gdef\lvert{\delimiter"426A30C }
\gdef\rvert{\delimiter"526A30C }

\usepackage{xcolor}

\begin{document}

\title{Equilibration of generalized subsystems: a quantum-channel approach}

\author{Pedro S. Correia}
\affiliation{Departamento de Ciências Exatas, Universidade Estadual de Santa Cruz, Ilhéus, Bahia 45662-900, Brazil}
\email{pscorreia@uesc.br}
\author{Adalberto D. Varizi}
\affiliation{Instituto de Física, Universidade Federal Fluminense, Avenida Litoranea s/n, Gragoatá 24210-346, Niterói, Rio de Janeiro, Brazil}
\author{Gabriel Dias Carvalho}
\affiliation{Física de Materiais, Universidade de Pernambuco, 50720-001, Recife, Pernambuco, Brazil}

\date{\today}

\begin{abstract}
Quantum systems governed by unitary and reversible microscopic dynamics may nevertheless exhibit equilibration, in the sense that some effective description becomes time independent. Standard equilibration results usually consider two separate situations: system-environment structures, in which the composite system evolves unitarily while the system of interest equilibrates, and restricted measurements, such as coarse-grained POVMs and observables, in which the measurement statistics equilibrate. Here, we bring these descriptions into a common state-level framework using the concept of generalized subsystems, where the accessible effective state appears as the output of a quantum channel acting on the microscopic state. We derive bounds showing that generalized subsystems equilibrate when their dimension is small compared with the effective dimension of the discarded microscopic information.
We further show that this condition is met for typical initial states in large subspaces and that the resulting equilibrium description is largely insensitive to microscopic initial details. The framework recovers the usual equilibration bounds for ordinary subsystems and finite families of POVMs. As an example, we also introduce a finite-resolution energy channel that maps unresolved microscopic energy levels into effective energy levels, thereby making residual effective coherences explicit and showing how spectral multiplicities constrain those coherences while strengthening equilibration. Our results provide a unified state-level formulation of quantum equilibration under general forms of limited accessible information.
\end{abstract}

\maketitle

\section{Introduction}

Equilibration is the emergence of effective time-independent statistics for accessible degrees of freedom in large systems, despite microscopically reversible dynamics. In classical statistical mechanics, a standard explanation relies on coarse-graining and typicality: many microstates give rise to the same macroscopic description. A gas at equilibrium, for instance, appears stationary at the macroscopic level because many different microscopic configurations of positions and momenta correspond to the same values of macroscopic variables such as volume, pressure, and temperature. The underlying microscopic state nevertheless continues to evolve. Equilibration in the quantum case is similar in spirit: although the microscopic global state continues to evolve unitarily, equilibration emerges at the level of an effective description that retains only limited accessible information about that state.

A paradigmatic setting for quantum equilibration is the subsystem picture, in which a large isolated many-body system is partitioned into a system of interest and an environment: a small subset of accessible particles defines the subsystem $S$, while the remaining untracked particles form the environment $E$, as illustrated in Fig.~\ref{fig:restricted-access-examples}(a). Although the global state evolves unitarily, the accessible information is encoded in the reduced state of the subsystem. For Hamiltonians satisfying suitable non-resonance conditions, a key result \cite{linden2009, gogolin2016} shows that this reduced state equilibrates, i.e., remains, for almost all times, close to a stationary state. In particular, equilibration becomes more pronounced when the environment is sufficiently large, as information initially accessible in $S$ is redistributed into inaccessible environmental degrees of freedom and system-environment correlations.

\begin{figure*}[t!]
\centering

\begin{minipage}[c]{0.24\textwidth}
    \centering
    \includegraphics[width=3.0cm]{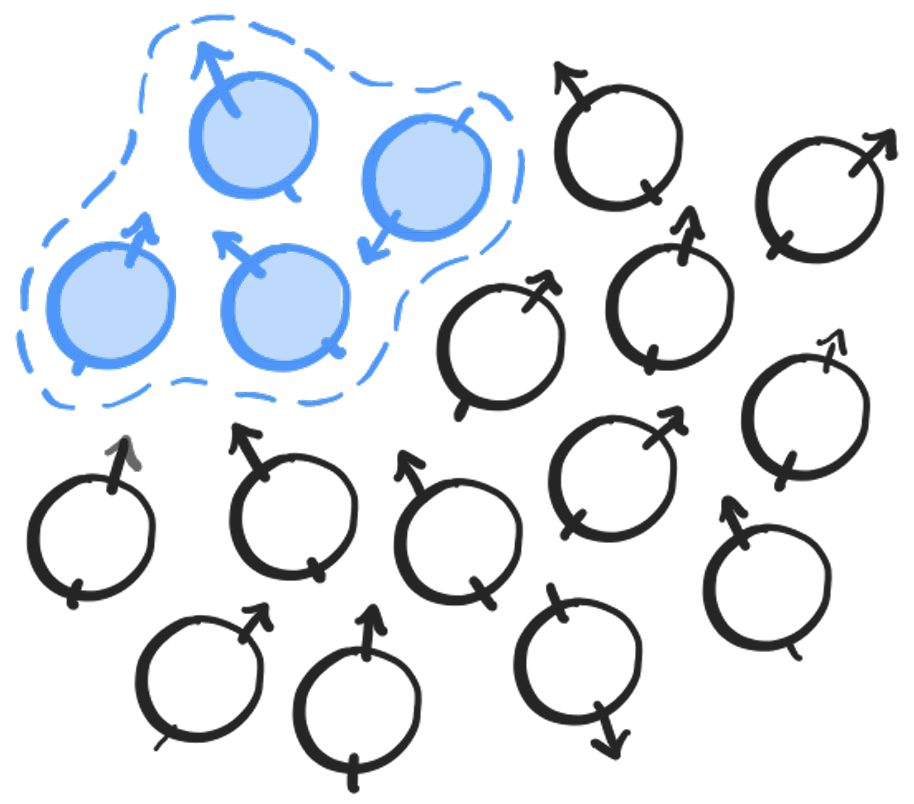}
\end{minipage}
\hfill
\begin{minipage}[c]{0.48\textwidth}
    \centering
    \includegraphics[width=6.5 cm]{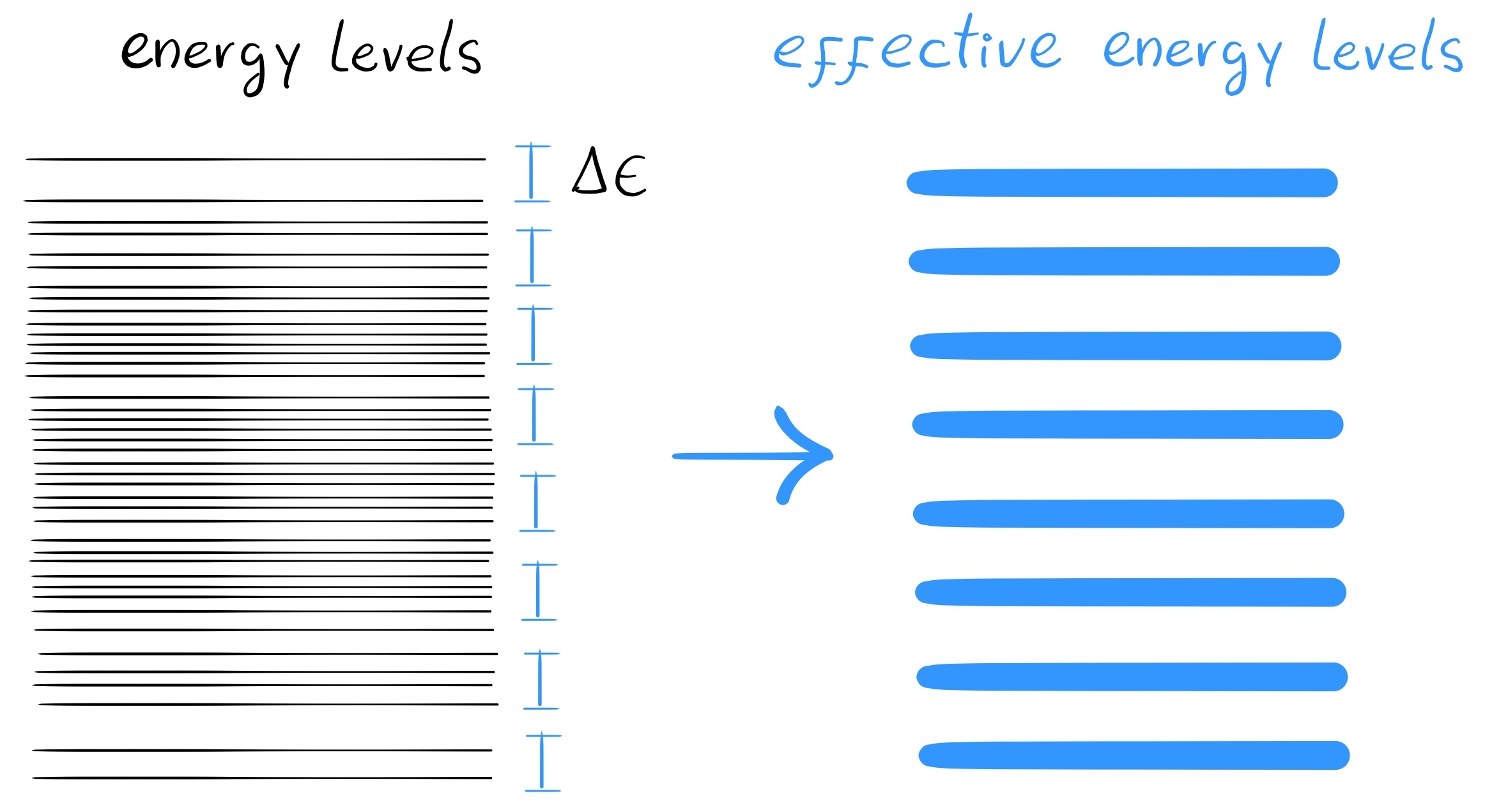}
\end{minipage}
\hfill
\begin{minipage}[c]{0.24\textwidth}
    \centering
    \includegraphics[width=2.8cm]{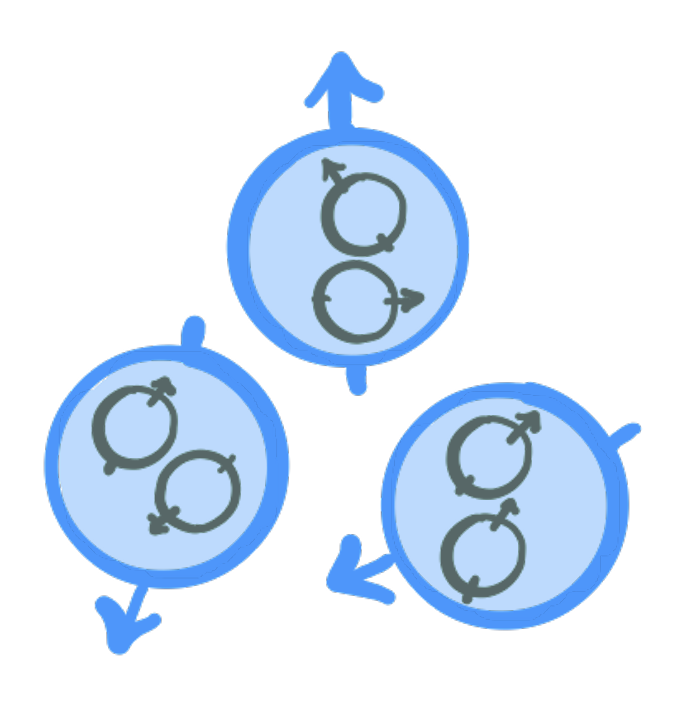}
\end{minipage}

\vspace{0.15cm}

\begin{minipage}[c]{0.24\textwidth}
    \centering
    \textbf{(a)}
\end{minipage}
\hfill
\begin{minipage}[c]{0.48\textwidth}
    \centering
    \textbf{(b)}
\end{minipage}
\hfill
\begin{minipage}[c]{0.24\textwidth}
    \centering
    \textbf{(c)}
\end{minipage}

\caption{\textbf{Examples of limited accessible information.}
\textbf{(a)} In the standard subsystem picture, only a subset of particles is accessible, while the rest forms an inaccessible environment.
\textbf{(b)} Finite-resolution energy measurements group microscopic energy levels into effective energy windows, so that only a coarse-grained energy description is accessible.
\textbf{(c)} In a blurred and saturated detector, distinct microscopic configurations can become indistinguishable at the accessible level; for instance, two microscopic qubits may be described by a single effective two-level system.}
\label{fig:restricted-access-examples}
\end{figure*}

Measurement-based equilibration provides another setting in which equilibration emerges from limited accessible information, now determined not by a subsystem partition but by the information obtainable from a restricted set of feasible measurements performed on the system. Finite-resolution energy measurements provide a physically important example of this situation. In macroscopic systems, finite energy resolution can encompass an overwhelmingly large spectral multiplicity \cite{reimann2008,reimann2010,reimann2012,reimann2013,passos2025}, so that measurements cannot resolve individual microscopic energy values but only coarse-grained energy windows, as illustrated in Fig.~\ref{fig:restricted-access-examples}(b). Such observations can be described by coarse-grained observables or POVMs, and equilibration means that the corresponding accessible information becomes, for almost all times, indistinguishable from that obtained from the stationary time-averaged state. Results show that this behavior is favored when the global state effectively occupies many energy levels compared with the information resolved by the accessible observations \cite{reimann2008,short2011,short2012}.

Although subsystem-based and measurement-based equilibration are formulated in different terms, they express the same observer-dependent situation: only part of the information contained in the evolving microscopic state is operationally available. In the subsystem setting, this information is encoded in a reduced state on a tensor factor; in the measurement setting, it is encoded in expectation values or outcome statistics for a feasible set of measurements. This shared viewpoint, centered on limited accessible information, motivates a unified treatment.

The present work brings these different formulations into a unified treatment by employing the notion of \emph{generalized subsystems}~\cite{alicki2009}, an approach first introduced to the foundations of statistical mechanics in Ref.~\cite{correia2024}. In this framework, formally introduced in Sec.~\ref{sec:gen subs}, restricted access to a microscopic global system is represented by a quantum channel, namely a completely positive, trace-preserving (CPTP) map. This map assigns to each microscopic state an effective accessible state, which we interpret as the state of a generalized subsystem. Thus, the channel connects two levels of description: the microscopic state of the full system and a state-level description available to the observer. In this way, ordinary subsystems, POVMs, and coarse-grained descriptions can be treated within the same language. As an illustration, the blurred-and-saturated detector introduced in Refs.~\cite{duarte2017, silva2019} and shown in Fig.~\ref{fig:restricted-access-examples}(c) maps two microscopic qubits into a single two-level system, providing an effective description that is not reducible to an ordinary partial trace.

Equilibration then becomes a natural question for generalized subsystems: under what conditions does the state of a generalized subsystem equilibrate under the microscopic unitary dynamics? The main result of this work, presented in Sec.~\ref{sec:gen equi}, answers this question by providing a bound on the time-averaged deviation between the state of the generalized subsystem and its stationary time-averaged state. The bound shows that equilibration is controlled by the size of the generalized subsystem relative to the effective dimension explored by the microscopic dynamics. This intuition is no longer tied to a fixed subsystem--environment partition, but applies whenever the effective description can be represented by a generalized subsystem. We also show that the condition for equilibration is typically satisfied for random initial states in large subspaces, and that the resulting equilibrium description becomes largely insensitive to microscopic initial details.

To illustrate the scope and physical significance of the generalized-subsystem framework, we develop two applications in Sec.~\ref{sec:applications}. First, we show that standard equilibration bounds for finite families of POVMs are recovered as state-level bounds for the generalized subsystems associated with those measurements. In this way, equilibration described through measurements is placed in the same effective-state language as equilibration described through subsystems. Second, we introduce a finite-resolution energy channel that maps unresolved microscopic energy levels into effective energy levels. Unlike standard projective coarse-graining, which retains only coarse energy populations, this channel produces an effective quantum state and therefore makes any residual coherences between effective energy levels explicit. This allows us to show how finite spectral resolution both constrains the coherences that remain operationally accessible and strengthens equilibration through the large multiplicities of unresolved microlevels.

Taken together, the recovery of the usual subsystem result of Ref.~\cite{linden2009} as the partial-trace case, the POVM formulation, and the finite-resolution energy channel show that different formulations of equilibration can be understood as the equilibration of the effective state of a corresponding generalized subsystem. The framework therefore provides a common state-level language for situations in which only limited information about a unitarily evolving microscopic state is operationally accessible, whether this limitation is described by a tensor-product partition, a restricted family of measurements, finite spectral resolution, or any other generalized-subsystem structure.

\section{Generalized Subsystems}
\label{sec:gen subs}

Consider a microscopic global quantum system that we want to analyze only through some effective level of description. Let \(\mathcal{H}_U\) denote its Hilbert space, and let \(\rho\in\mathcal{L}(\mathcal{H}_U)\) be its state. As discussed in the Introduction, the usual subsystem picture provides a standard example of such an effective description. In this setting, the global Hilbert space is assumed to factorize as
\(\mathcal{H}_U=\mathcal{H}_S\otimes\mathcal{H}_E\), where \(\mathcal{H}_S\) is the subsystem of interest and \(\mathcal{H}_E\) is the environment. The accessible description is given by the reduced state of the subsystem, 
\(\rho_S=\tr_E[\rho]\), with
\(\tr_E:\mathcal{L}(\mathcal{H}_S\otimes\mathcal{H}_E)\to\mathcal{L}(\mathcal{H}_S)\).
Thus, in the usual subsystem picture, the effective information is fixed by a tensor-product partition and obtained through a partial trace.

As discussed above, more general descriptions may arise from other operational restrictions that are not reducible to the usual subsystem framework, such as coarse-grainings, finite resolutions, and related forms of limited access. In the blurred-and-saturated detector~\cite{duarte2017,silva2019}, for example, two neighboring qubits are probed through emitted light. The detector is blurred, so it cannot identify which qubit emitted, and saturated, so it cannot distinguish one emitted photon from two. Hence, the states \(\ket{01}\), \(\ket{10}\), and \(\ket{11}\) become operationally indistinguishable and are represented by a single effective excited state \(\ket{1_\Lambda}\), while \(\ket{00}\) is mapped to an effective ground state \(\ket{0_\Lambda}\). Such a mapping cannot be obtained from a partial trace, even allowing for a unitary rearrangement of the microscopic degrees of freedom \footnote{If it could be written as \(\rho_S=\tr_E[U\rho U^\dagger]\), the three orthogonal states \(\ket{01}\), \(\ket{10}\), and \(\ket{11}\), all mapped to \(\ket{1_\Lambda}\), would require three orthogonal environmental states. Thus \(\dim\mathcal H_E\ge 3\), so \(\dim(\mathcal H_S\otimes\mathcal H_E)\ge 6\), contradicting the original dimension \(4\).}.

A second elementary example is provided by finite energy resolution in a three-level system, which can be viewed as a simple qutrit-to-qubit coarse-graining related to effective descriptions studied in other contexts~\cite{duarte2020}. Consider a microscopic system with three energy eigenstates \(\ket{E_0}\), \(\ket{E_1}\), and \(\ket{E_2}\). Suppose that the ground level \(\ket{E_0}\) is resolved, while the two excited levels \(\ket{E_1}\) and \(\ket{E_2}\) are separated by an energy difference smaller than the resolution of the apparatus. Operationally, the measurement does not distinguish these two excited levels, but only whether the system is in the ground level or in the unresolved excited energy window. The effective description is therefore two-dimensional: \(\ket{E_0}\) is represented by an effective ground state \(\ket{0_\Lambda}\), whereas both \(\ket{E_1}\) and \(\ket{E_2}\) are represented by the same effective excited state \(\ket{1_\Lambda}\). Again, this effective qubit cannot be interpreted as an ordinary subsystem obtained by tracing out an environment: a three-dimensional microscopic Hilbert space cannot be decomposed into a two-dimensional subsystem and an additional tensor factor.

The effective descriptions in these examples motivate the broader notion of a \emph{generalized subsystem}. As established in Ref.~\cite{alicki2009}, and further developed in Refs.~\cite{duarte2017, kabernik2018, silva2019, carvalho2020, kabernik2020, pineda2021, vallejos2022, correia2024, castillo2025}, the state of such a generalized subsystem is described as the output of a channel \(\Lambda\),
\begin{equation}
    \rho_S=\Lambda[\rho],
    \qquad
    \Lambda:\mathcal{L}(\mathcal{H}_U)\to\mathcal{L}(\mathcal{H}_S),
    \label{eq:gsub_channel}
\end{equation}
with \(d_S=\dim(\mathcal{H}_S)\le d_U=\dim(\mathcal{H}_U)\). In this way, the channel \(\Lambda\) connects the microscopic state \(\rho\) to the effective state \(\rho_S\) assigned to the generalized subsystem.

\section{Equilibration of Generalized Subsystems}
\label{sec:gen equi}

With the generalized-subsystem framework established, we now study the conditions under which the state of a generalized subsystem equilibrates under microscopic unitary dynamics, extending familiar subsystem equilibration bounds beyond the standard tensor-factor setting.

\subsection{The microscopic system} 

Consider a large isolated quantum system associated with a Hilbert space $\mathcal{H}_U$ of dimension $d_U$. The global dynamics is governed by a Hamiltonian \(H = \sum_j E_j \ket{E_j}\bra{E_j}\), where \(\ket{E_j} \in \mathcal{H}_U\) are energy eigenstates with eigenvalues \(E_j\). For simplicity, we assume that \(H\) has non-degenerate energy gaps, but standard extensions can accommodate degenerate gaps~\cite{short2012}.

We consider the initial microscopic state to be pure. At time \(t\), the global system is described by \(\ket{\psi(t)} \in \mathcal{H}_U\), which can be expressed in the energy eigenbasis, setting \(\hbar=1\), as \(\ket{\psi(t)} = \sum_j c_j e^{-iE_jt}\ket{E_j}\), with \(\sum_j |c_j|^2 = 1\). For convenience, the pure state can also be represented by the corresponding density operator \(\psi(t)=\ket{\psi(t)}\bra{\psi(t)}\), which reads
\begin{align}
    \psi(t)
    = \sum_j |c_j|^2 \ket{E_j}\bra{E_j}
    + \sum_{j\neq k} c_j c_k^\ast e^{-i(E_j-E_k)t}
    \ket{E_j}\bra{E_k}.
    \label{eq:globalpure}
\end{align}

The infinite-time averaged state is \(\omega=\langle\psi(t)\rangle_t =\lim_{\tau\rightarrow\infty}\frac{1}{\tau}\int_0^\tau \psi(t)\,dt \). Under the non-degenerate-gap assumption, the time average removes the oscillatory coherences between distinct energy eigenstates, leaving only the diagonal energy populations:
\begin{align}
    \omega &= \sum_j |c_j|^2 \ket{E_j}\bra{E_j}.
    \label{eq:omega}
\end{align}

Thus, although \(\psi(t)\) remains pure throughout the unitary evolution, the time-averaged state \(\omega\) is stationary and retains the conserved energy populations \(|c_j|^2\). In this sense, \(\omega\) encodes the distribution over energy eigenstates that underlies the microscopic evolution.

To quantify how many energy eigenstates effectively participate in this distribution, we use the effective dimension. Formally, for any state \(\rho\), we define it as
\begin{align}
    d_{\mathrm{eff}}(\rho):=\frac{1}{\tr(\rho^2)} .
    \label{eq:deff}
\end{align}
Unlike the rank, which only counts the number of nonzero eigenvalues, \(d_{\mathrm{eff}}(\rho)\) also depends on how evenly the eigenvalues are distributed. It is large when many eigenvectors contribute with comparable weights, and small when the state is concentrated on only a few eigenvectors. For the time-averaged state \(\omega\), this general definition becomes
\[
    d_{\mathrm{eff}}(\omega)=\frac{1}{\sum_j |c_j|^4},
\]
which measures the effective number of energy eigenstates involved in the evolution of \(\psi(t)\).

\subsection{Generalized Subsystem Equilibration}

Having specified the underlying unitary dynamics and its stationary time-averaged state, we now ask how this dynamics appears through the effective description defined by a generalized subsystem. Given the global pure state \(\psi(t)\) of Eq.~\eqref{eq:globalpure}, the state of the generalized subsystem at time \(t\) is
\begin{align}
    \rho_S(t) 
    = \Lambda[\psi(t)] 
    = \sum_{j,k} c_j c_k^\ast e^{-i(E_j-E_k)t}\,\Lambda(\ket{E_j}\bra{E_k}),
    \label{eq:gsub}
\end{align}
where \(\Lambda: \mathcal{L}(\mathcal{H}_U)\to \mathcal{L}(\mathcal{H}_S)\).  
The time-averaged state of the generalized subsystem is
\begin{align}
    \omega_S 
    = \Lambda[\omega] 
    = \sum_j |c_j|^2 \Lambda(\ket{E_j}\bra{E_j}),
    \label{eq:gsubeq}
\end{align}
where \(\omega\) is the global time-averaged state \eqref{eq:omega}.

We say that the generalized subsystem equilibrates when \(\rho_S(t)\) remains, for most times, indistinguishable from its time-averaged state \(\omega_S\), even though the microscopic state \(\psi(t)\) continues to evolve unitarily. In this sense, equilibrium is not a property of the full microscopic state, but of the effective state of the generalized subsystem defined by \(\Lambda\). This motivates quantifying equilibration by the time-averaged trace distance between \(\rho_S(t)\) and \(\omega_S\), for which we obtain the following bound (derived in App.~\ref{ap:mainresults}):
\begin{align}
\label{eq:main-result}
    \big\langle D\bigl(\rho_S(t), \omega_S\bigr) \big\rangle_t 
    \le \frac{1}{2}\sqrt{\frac{d_S}{d_{\mathrm{eff}}(\Lambda_c[\omega])}} .
\end{align}
Here \(D(\rho,\sigma)=\tfrac{1}{2}\|\rho-\sigma\|_1\) is the trace distance, and \(d_{\mathrm{eff}}\) is the effective dimension defined in Eq.~\eqref{eq:deff}. The map \(\Lambda_c\) is a complementary channel associated with \(\Lambda\): for a Stinespring dilation \(V:\mathcal H_U\to\mathcal H_S\otimes\mathcal H_{\mathrm{aux}}\), such that \(\Lambda[\rho]=\tr_{\mathrm{aux}}[V\rho V^\dagger]\), it is defined by \(\Lambda_c[\rho]=\tr_S[V\rho V^\dagger]\). Thus, \(\Lambda\) gives the accessible effective state, whereas \(\Lambda_c\) gives the complementary output associated with the same dilation. For a general channel, \(\Lambda_c[\rho]\) should not be interpreted as a physical environment, but as a complementary Stinespring output that records microscopic information not resolved in the generalized-subsystem state \(\Lambda[\rho]\).

The complementary channel is not unique, because the Stinespring dilation is not unique. However, this does not make the bound ambiguous: \(d_{\mathrm{eff}}(\Lambda_c[\omega])\) is independent of the chosen dilation and is therefore an intrinsic quantity of the pair \((\Lambda,\omega)\).\footnote{One may choose a minimal Stinespring dilation \(V_{\min}:\mathcal H_U\to\mathcal H_S\otimes\mathcal H_{\mathrm{aux},\min}\), with \(\dim\mathcal H_{\mathrm{aux},\min}=\operatorname{rank}(J_\Lambda)\), where \(J_\Lambda=(\Lambda\otimes\mathrm{id})(\ket{\Omega}\bra{\Omega})\), \(\ket{\Omega}=\sum_i\ket{i}\otimes\ket{i}/\sqrt{d_U}\). Any other dilation has the form \(V=(\mathbb I_S\otimes W)V_{\min}\), with \(W\) an isometry, so \(\Lambda_c[\omega]=W\Lambda_{c,\min}[\omega]W^\dagger\). Hence the nonzero spectrum, and therefore \(d_{\mathrm{eff}}\), is unchanged.} The two ingredients of the problem therefore enter together: the microscopic dynamics and the initial state determine \(\omega\), while \(\Lambda\) determines the accessible description.

The bound extends the usual subsystem equilibration mechanism to arbitrary generalized subsystems. For the usual subsystem case, \(\mathcal H_U=\mathcal H_S\otimes\mathcal H_E\) and \(\Lambda=\tr_E\). This corresponds to the Stinespring isometry \(V=\mathbb I_{\mathcal H_S\otimes\mathcal H_E}\), with \(\mathcal H_{\mathrm{aux}}\equiv\mathcal H_E\), so that the complementary channel is \(\Lambda_c=\tr_S\) and \(\Lambda_c[\omega]=\tr_S[\omega]\), the time-averaged state of the environment. In this special case, the complementary output is literally the state of the inaccessible environment, and \(d_{\mathrm{eff}}(\tr_S[\omega])\) counts, effectively, how many environmental states accommodate the information not retained by the subsystem. Equation~\eqref{eq:main-result} then reduces to the equilibration bound of Ref.~\cite{linden2009}. For an arbitrary generalized subsystem, there need not be any microscopic system--environment factorization. Nevertheless, the same structural idea survives at the level of channels: the Stinespring dilation associates an effective environment with the generalized subsystem, and \(\Lambda_c[\omega]\) is the corresponding complementary state. Equilibration is then favored whenever this effective environment has large effective dimension compared with \(d_S\).

A looser consequence of Eq.~\eqref{eq:main-result}, also derived in App.~\ref{ap:mainresults}, involves only \(d_S\) and the effective dimension of the global time-averaged state:
\begin{align}
\label{eq:main-result-envfree}
    \big\langle D\bigl(\rho_S(t), \omega_S\bigr) \big\rangle_t 
    \le \frac{d_S}{2\sqrt {d_{\mathrm{eff}}(\omega)}} .
\end{align}
This second inequality does not require evaluating the complementary channel. It follows from
\(\tr[(\Lambda_c[\omega])^2]\le d_S\,\tr(\omega^2)\), which implies
\(d_{\mathrm{eff}}(\Lambda_c[\omega])\ge d_{\mathrm{eff}}(\omega)/d_S\). Thus, while Eq.~\eqref{eq:main-result} gives the sharper bound in terms of the complementary output associated with the pair \((\Lambda,\omega)\), Eq.~\eqref{eq:main-result-envfree} provides a simpler sufficient criterion for equilibration. It depends only on the dimension \(d_S\) of the generalized subsystem and on the effective dimension explored by the microscopic dynamics, encoded in \(\omega\). In particular, whenever \(d_{\mathrm{eff}}(\omega)\gg d_S^2\), the generalized subsystem equilibrates.

\subsection{Typicality in Generalized-Subsystem Equilibration}

Equilibration in generalized subsystems is typical in two related senses. 

First, for most initial states in a large subspace, the microscopic time-averaged state has a large effective dimension \(d_{\mathrm{eff}}(\omega)\). So Eq.~\eqref{eq:main-result-envfree} guarantees equilibration of any generalized subsystem of sufficiently small dimension. 
Second, for a fixed channel \(\Lambda\) defining the generalized subsystem, the corresponding equilibrium effective state depends only weakly on the microscopic initial state.

To make these statements precise, let the initial microscopic state \(\ket{\psi}\) be drawn uniformly, i.e., according to the Haar measure, from a subspace \(\mathcal H_R\subseteq\mathcal H_U\) of dimension \(d_R\). In standard statistical mechanics, for example, \(\mathcal{ H}_R\) may be a subspace selected by a fixed energy shell. Let \(\overline{\vphantom{X}(\cdot)}^{\,\psi}\) denote the corresponding ensemble average. It is known that \(d_{\mathrm{eff}}(\omega)\) strongly concentrates at large values, with
\(\overline{d_{\mathrm{eff}}(\omega)}^{\,\psi} \ge d_R/2\) and
\(\Pr\{d_{\mathrm{eff}}(\omega)<d_R/4\}\) exponentially small in \(\sqrt{d_R}\)~\cite{linden2009}. Inserting this typical behavior into Eq.~\eqref{eq:main-result-envfree} yields, with exponentially high probability,
\begin{align}
\big\langle D\big(\rho_S(t),\omega_S\big)\big\rangle_t
\;\le\; \frac{d_S}{\sqrt{d_R}}\,.
\end{align}
Therefore, for typical random states in sufficiently large subspaces, generalized subsystems equilibrate, with deviations suppressed as \(d_S/\sqrt{d_R}\).
Thus, the condition required by Eq.~\eqref{eq:main-result-envfree} is naturally satisfied for any generalized subsystem with \(d_S\ll \sqrt{d_R}\).

The second sense of typicality concerns the dependence of the equilibrium state on the microscopic initial state. For each uniformly sampled realization \(\ket{\psi}\in\mathcal{H}_R\subseteq\mathcal{H}_U\), let \(\omega^\psi\) be the associated microscopic time-averaged state and
\(\omega_S^\psi:=\Lambda[\omega^\psi]\) the corresponding time-averaged state of the generalized subsystem. Then, as shown in App.~\ref{ap:mainresults}, \(\omega_S^\psi\) concentrates around its ensemble mean \(\overline{\omega_S^\psi}^{\,\psi}\):
\begin{align}
\overline{D\!\left(\omega_S^\psi,\overline{\omega_S^\psi}^{\,\psi}\right)}^{\,\psi}
\;\le\; \frac{1}{2}\sqrt{\frac{d_S}{d_R}}.
\label{eq:concentration-omegaS}
\end{align}
Combined with the equilibration bound~\eqref{eq:main-result-envfree}, this means that, for typical initial states and for most times, \(\rho_S^\psi(t)\) is close to an equilibrium description that is largely independent of the microscopic preparation \(\ket{\psi}\). More precisely, \(\rho_S^\psi(t)\) equilibrates around its own time-averaged state \(\omega_S^\psi\), while Eq.~\eqref{eq:concentration-omegaS} shows that these states \(\omega_S^\psi\) themselves concentrate around a common ensemble-averaged state. Thus, for a fixed generalized subsystem, the equilibrium description is only weakly sensitive to the microscopic state chosen within the large subspace.

\section{Applications}
\label{sec:applications}

\subsection{POVM equilibration as a generalized subsystem}

As a first application, we consider an observer whose access to the system is restricted to a finite family of POVMs. We show that, by representing each POVM through its associated measurement channel, the generalized-subsystem equilibration bound reproduces the standard POVM equilibration result of Ref.~\cite{short2011}. Thus, equilibration described through measurements is placed in the same effective-state language as ordinary subsystem equilibration. Together with the usual subsystem case recovered in the previous section, this shows that the generalized-subsystem framework contains standard formulations of equilibration as particular cases.

Let \(\mathcal M=\{M^{(m)}\}_{m=1}^{n_{\mathcal M}}\) be a finite family of POVMs available to the observer. Each measurement \(M^{(m)}=\{M^{(m)}_r\}_{r=1}^{R_m}\) has \(R_m\) possible outcomes. Operationally, the observer does not access the full microscopic state \(\psi\), but only the probability distribution generated by each measurement. In the generalized-subsystem framework, this accessible information is represented at the state level by the measurement channel
\begin{equation}
\Lambda_{M^{(m)}}(\psi)=\sum_{r=1}^{R_m}\tr(M^{(m)}_r\psi)\ket r\!\bra r .
\end{equation}
Here \(\{\ket r\}_{r=1}^{R_m}\) spans a Hilbert space \(\mathcal H_{S_m}\) of dimension \(d_{S_m}=R_m\), which represents the classical outcome register of the measurement. Thus, \(\Lambda_{M^{(m)}}(\psi)\) is the effective state of a generalized subsystem, diagonal in the outcome basis, whose entries are the corresponding outcome probabilities \(\tr(M^{(m)}_r\psi)\).

For each measurement channel, Eq.~\eqref{eq:main-result-envfree} bounds the time-averaged trace distance by \(R_m/(2\sqrt{d_{\mathrm{eff}}(\omega)})\). Denoting by \(N(\mathcal M):=\sum_{m=1}^{n_{\mathcal M}}R_m\) the total number of outcomes in the family, we apply the bound above to each POVM separately. Since the trace distances are nonnegative, \(\max_m x_m(t)\le\sum_m x_m(t)\) at each time. Using this inequality and linearity of the time average, we obtain
\begin{align}
\Big\langle \max_{1\le m\le n_{\mathcal M}} D\!\bigl(\Lambda_{M^{(m)}}(\psi(t)),\Lambda_{M^{(m)}}(\omega)\bigr) \Big\rangle_t \le \frac{N(\mathcal M)}{2\sqrt{d_{\mathrm{eff}}(\omega)}}.
\label{eq:povm-family-bound}
\end{align}
Hence, if \(\sqrt{d_{\mathrm{eff}}(\omega)}\) is large compared with \(N(\mathcal M)\), then all measurement-defined generalized subsystems remain close to their equilibrium values for almost all times.

We now connect Eq.~\eqref{eq:povm-family-bound} with the standard statistical formulation of Ref.~\cite{short2011}. Since the measurement-channel outputs are diagonal in the outcome register, their trace distance is simply the total-variation distance between the corresponding outcome distributions. Indeed,
\(D(\Lambda_M(\rho),\Lambda_M(\sigma))=\mathcal D_M(\rho,\sigma)\), where \(\mathcal D_M(\rho,\sigma):= \frac12 \sum_{r=1}^{R(M)} \left|\tr\!\left[M_r(\rho-\sigma)\right]\right|\). Denoting by \(\mathcal D_{\mathcal M}\) the maximal statistical distance over the family \(\mathcal M\), Eq.~\eqref{eq:povm-family-bound} gives \(\langle\mathcal D_{\mathcal M}(\psi(t),\omega)\rangle_t\le N(\mathcal M)/(2\sqrt{d_{\mathrm{eff}}(\omega)})\). Thus, POVM equilibration is recovered here as equilibration of the generalized subsystems associated with the measurement channels, with the same operational form and scaling as in Ref.~\cite{short2011}.

\subsection{Finite spectral resolution as a generalized subsystem}
\label{sec:energy-band}

As highlighted in the Introduction, finite-resolution energy measurements provide a physically important form of limited accessible information. In macroscopic systems, the density of states is so large that individual microscopic energy levels cannot be resolved in practice~\cite{reimann2008,reimann2010,reimann2012,reimann2013,passos2025}. A measurement apparatus with finite resolution, therefore, groups together all energy eigenvalues lying within the same resolution window \(\Delta E\), representing them by a single effective outcome.  This finite-resolution description naturally defines a generalized subsystem whose states are associated with the effective energy outcomes determined by the resolution $\Delta E$.

More specifically, suppose that the resolution \(\Delta E\) defines \(K\) effective energy levels or outcomes. We write the microscopic Hamiltonian of the system by labeling its energy eigenstates according to the effective level to which they belong, as \(H=\sum_{j=0}^{K-1}\sum_{\nu=1}^{d_j} E_{j,\nu}\ket{j,\nu}\bra{j,\nu}\). Here \(j\) labels the effective energy selected by the resolution \(\Delta E\), while \(\nu\) distinguishes the \(d_j\) microscopic energy eigenstates grouped into that effective level. Thus, \(\mathcal H_U\) has total dimension \(d_U=\sum_{j=0}^{K-1}d_j\).

The generalized subsystem associated with this finite-resolution description is represented on a Hilbert space \(\mathcal H_S\), spanned by the states \(\{\ket{j_\Lambda}\}_{j=0}^{K-1}\) associated with the effective energy outcomes. As a consequence, $d_S=\mathrm{dim}(\mathcal{H}_S)=K$. The corresponding channel \(\Lambda_{\Delta E}:\mathcal{L}(\mathcal{H}_U)\to\mathcal{L}(\mathcal{H}_S)\) is defined by
\begin{subequations}
\label{eq:energy-window-map}
\begin{align}
   \Lambda_{\Delta E}\bigl[\ket{j,\mu}\bra{j,\nu}\bigr] 
   &= \delta_{\mu\nu}\,\ket{j_\Lambda}\bra{j_\Lambda}, 
   \label{eq:energy-window-map:diag} \\
   \Lambda_{\Delta E}\bigl[\ket{i,\mu}\bra{j,\nu}\bigr] 
   &= \alpha_{ij}\,\ket{i_\Lambda}\bra{j_\Lambda}, 
   \qquad i\neq j,
   \label{eq:energy-window-map:off}
\end{align}
\end{subequations}
with $\delta_{\mu\nu}$ the Kronecker delta and \(\alpha_{ij}=\alpha_{ji}^\ast\), so that Hermiticity is preserved.

By directly applying this channel to a generic microscopic state \(\rho=\sum_{i,j,\mu,\nu} \rho_{i\mu,\,j\nu}\,\ket{i,\mu}\bra{j,\nu}\), we obtain the effective description
\begin{align}
\label{eq:energy-window-rhoS}
\rho_S 
= \sum_{i=0}^{K-1} 
\left(\sum_{\mu=1}^{d_i}\rho_{i\mu,\,i\mu}\right)&
\ket{i_\Lambda}\bra{i_\Lambda}
+ \nonumber \\
&+\sum_{\substack{i,j=0\\ i\neq j}}^{K-1}
\alpha_{ij}
\sum_{\mu=1}^{d_i}\sum_{\nu=1}^{d_j}
\rho_{i\mu,\,j\nu}\,
\ket{i_\Lambda}\bra{j_\Lambda}.
\end{align}

An important feature of this formulation is that the coefficients \(\alpha_{ij}\) are not freely prescribed. As shown in Appendix~\ref{app:choi-proof}, complete positivity of \(\Lambda_{\Delta E}\) implies, for every pair \(i\neq j\), the constraint \(|\alpha_{ij}|\le 1/\sqrt{d_i d_j}\). Thus, coherences between distinct energy windows need not vanish a priori, but their magnitude is bounded by the window sizes \(d_i\) and \(d_j\).

This state-level description therefore contains more information than the effective energy-level populations alone: it also tracks the coherences that may remain visible between distinct effective energy levels. However, these coherences are not arbitrary. Requiring the map to be CPTP imposes bounds on the coefficients \(\alpha_{ij}\), showing that their contribution is progressively suppressed as the coarse-graining becomes stronger. For smaller or intermediate windows, residual coherences may still be present. In the macroscopic regime, where \(d_i,d_j\gg 1\), the constraint forces \(\alpha_{ij}\) to be extremely small, so that the effective state becomes practically diagonal in the finite-resolution energy basis. In this regime, our description recovers the same physical picture as the usual projective coarse-grained description, in which energy windows are represented by projectors onto microscopic subspaces~\cite{vsafranek2019, buscemi2023, rubino2025coarse}. The difference is that, here, the disappearance of the residual effective coherences follows from the CPTP constraint together with large spectral multiplicities, rather than being imposed a priori.

\begin{figure*}[t!]
    \centering
    \begin{minipage}{0.49\textwidth}
        \centering
        \includegraphics[width=\linewidth]{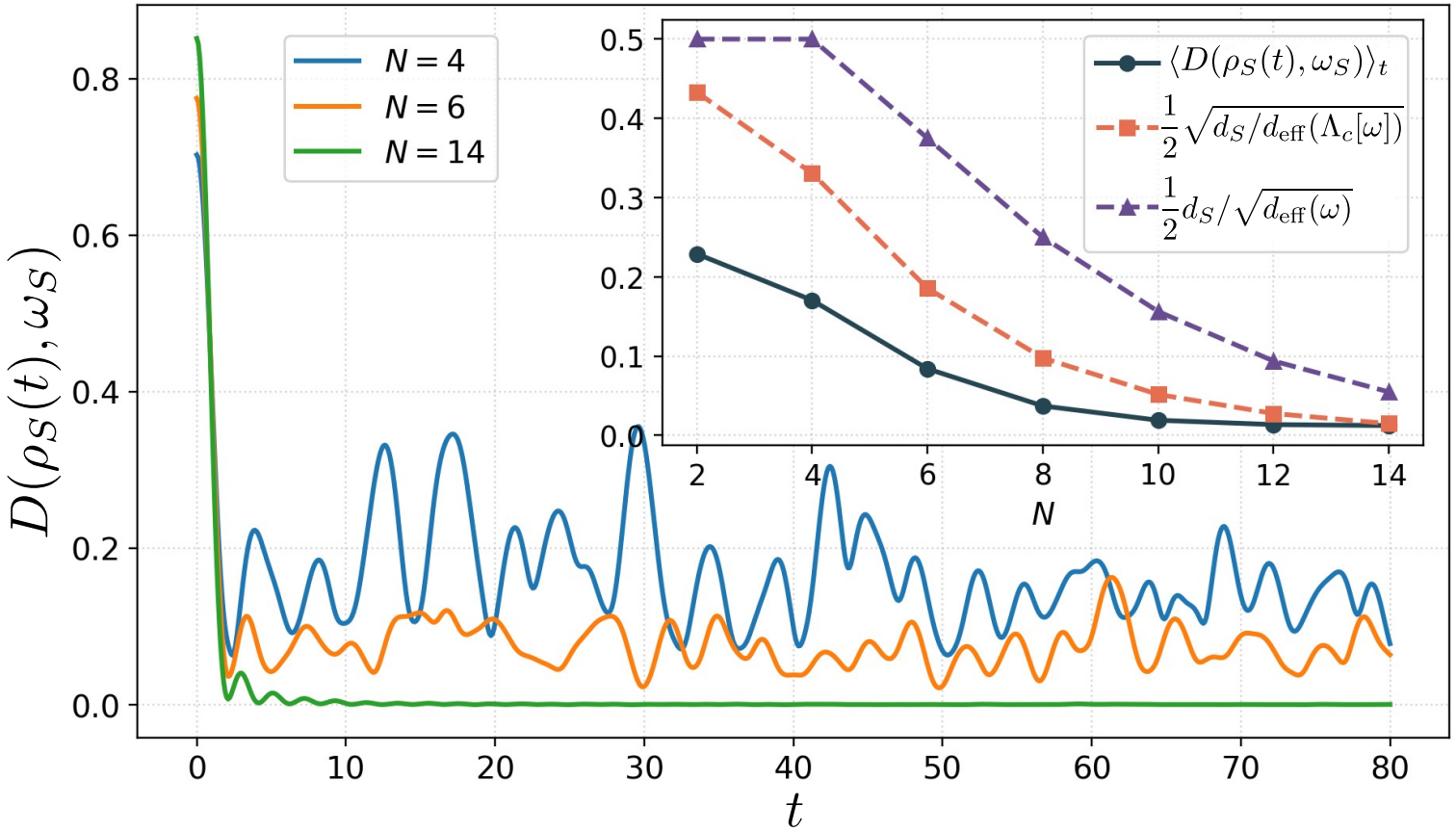}
    \end{minipage}
    \hfill
    \begin{minipage}{0.49\textwidth}
        \centering
        \includegraphics[width=\linewidth]{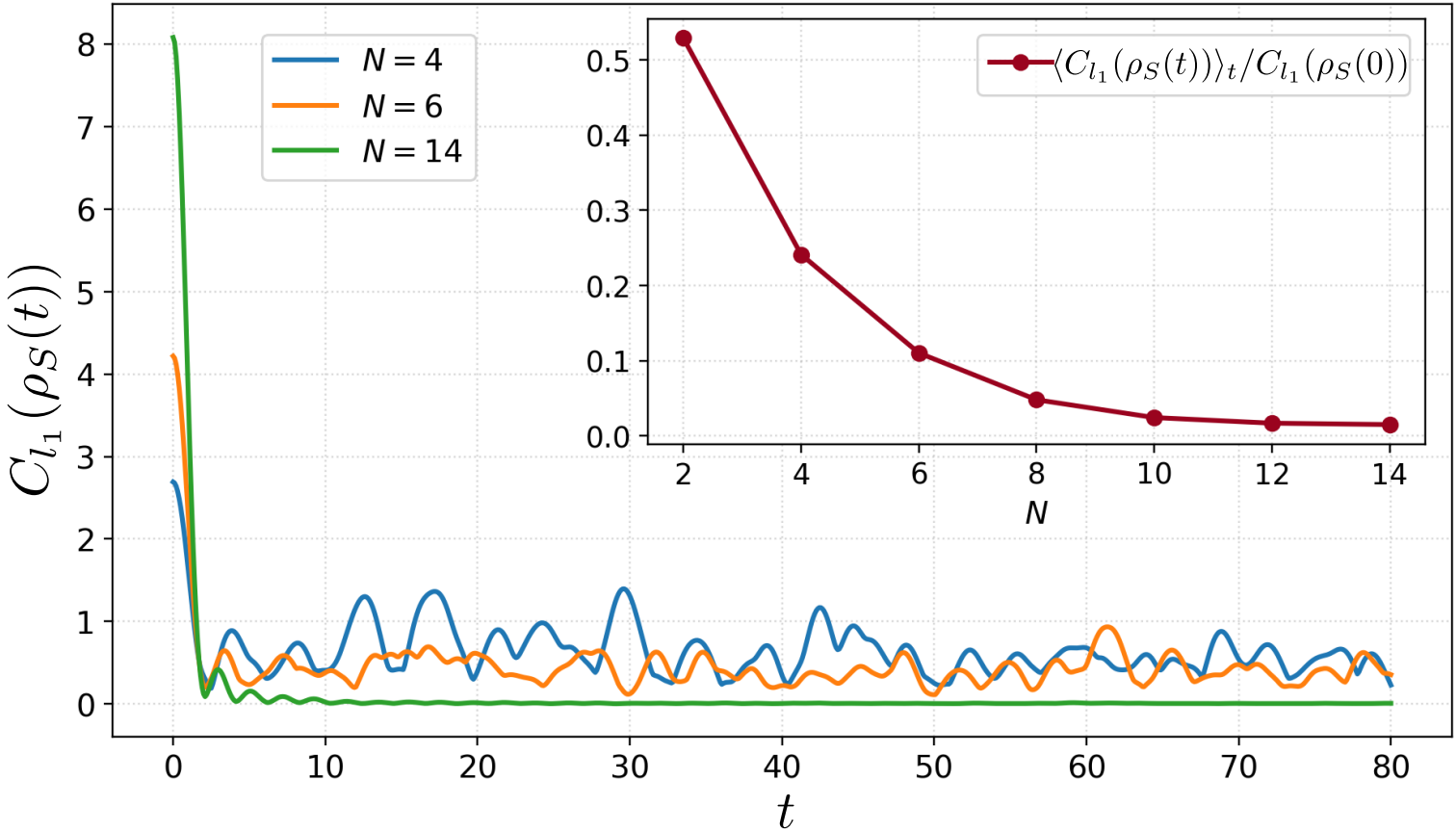}
    \end{minipage}
    \caption{Numerical illustration of the finite-resolution energy channel for a finite Ising spin chain. In both panels, the initial state is a balanced coherent superposition of all energy eigenstates, and the spectrum is partitioned into \(K=N\) resolved energy windows. Left: temporal behavior of the trace distance \(D(\varrho_S(t),\omega_S)\), together with its time average and the two equilibration bounds shown in the inset. Right: temporal behavior of the accessible coherence \(C_{l_1}(\varrho_S(t))\) between distinct effective energy levels, included as a diagnostic of the off-diagonal sector already controlled by trace-distance equilibration; the inset shows the normalized average coherence
\(\langle C_{l_1}(\varrho_S(t))\rangle_t/ C_{l_1}(\varrho_S(0))\).}
    \label{fig:energy-window-numerics}
\end{figure*}

The general equilibration result applies directly to the finite-resolution energy channel \(\Lambda_{\Delta E}\). Since \(d_S=K\) for this channel, Eq.~\eqref{eq:main-result-envfree} gives \(\langle D(\rho_S(t),\omega_S)\rangle_t\le K/(2\sqrt{d_{\mathrm{eff}}(\omega)})\). Thus, the finite-resolution description equilibrates whenever the number \(K\) of effective energy levels is small compared with \(\sqrt{d_{\mathrm{eff}}(\omega)}\), that is, the effective dimension of the system. This separation of scales is naturally expected in macroscopic many-body systems with local bounded interactions. Indeed, a finite energy resolution \(\Delta E\) gives \(K\sim (E_{\max}-E_{\min})/\Delta E\), and for local Hamiltonians with bounded interactions the total spectral width scales at most extensively, \(E_{\max}-E_{\min}=O(N)\)~\footnote{For a local Hamiltonian \(H=\sum_\ell h_\ell+\sum_\ell h_{\ell,\ell+1}\), with local terms of bounded norm, the operator norm of \(H\) is bounded by the sum of the norms of its local terms, which is \(O(N)\). Hence \(E_{\max}-E_{\min}\le 2\|H\|=O(N)\), which gives an extensive upper bound on the total spectral width.}. Therefore, for fixed finite resolution, the number of effective energy levels grows at most extensively with system size. When the initial state has significant weight over many microscopic energy eigenstates, \(d_{\mathrm{eff}}(\omega)\) becomes large, and the bound provides a strong equilibration estimate for the finite-resolution description.

The condition obtained from Eq.~\eqref{eq:main-result-envfree} is only sufficient, and it does not explicitly distinguish how the microscopic states are organized inside each finite-resolution window. The refined bound in Eq.~\eqref{eq:main-result} is more sensitive to this structure because it depends on the complementary output of the channel, i.e., on the information that is discarded by the effective description. This becomes especially clear in the population coarse-grained limit, \(\alpha_{ij}=0\), where the channel records only the resolved energy window \(j\) and discards the microscopic label \(\nu\) within that window. In this case, a Stinespring dilation can be chosen as \(V\ket{j,\nu}=\ket{j_\Lambda}\otimes\ket{\mathrm{aux}_{j\nu}}\). Thus, for \(\omega=\sum_{j,\nu}|c_{j\nu}|^2\ket{j,\nu}\bra{j,\nu}\), the complementary output is \(\Lambda_c[\omega]=\sum_{j,\nu}|c_{j\nu}|^2    \ket{\mathrm{aux}_{j\nu}}\bra{\mathrm{aux}_{j\nu}}\), which has the same nonzero spectrum as \(\omega\). Therefore, \(d_{\mathrm{eff}}(\Lambda_c[\omega])=d_{\mathrm{eff}}(\omega)\): the refined bound directly sees the microscopic multiplicity hidden inside the finite-resolution windows.

This simple limit already shows why the refined bound can be substantially stronger than Eq.~\eqref{eq:main-result-envfree}. Suppose, for example, that the finite-resolution description has \(K\) effective energy windows, each containing \(d\) unresolved microscopic levels, and that the initial state is typical in the corresponding \(Kd\)-dimensional subspace. Then, with high probability, \(d_{\mathrm{eff}}(\omega)\sim Kd/2\). Since \(d_S=K\), Eq.~\eqref{eq:main-result} gives the estimate on the left, while Eq.~\eqref{eq:main-result-envfree} gives the estimate on the right:
\begin{align}
    \frac{1}{2}\sqrt{\frac{K}{d_{\mathrm{eff}}(\Lambda_c[\omega])}}
    \sim
    \frac{1}{\sqrt{2d}},
    \qquad
    \frac{K}{2\sqrt{d_{\mathrm{eff}}(\omega)}}
    \sim
    \sqrt{\frac{K}{2d}} .
\end{align}
Thus, the refined bound only requires each effective energy window to contain many unresolved microscopic levels, \(d\gg1\), while the coarser bound requires the stronger condition \(d\gg K\). In particular, if \(K\) grows with \(d\), the refined bound may still show equilibration even when Eq.~\eqref{eq:main-result-envfree} is not informative.

We now turn to a numerical analysis of the finite-resolution energy channel. This analysis has two related aims: first, to illustrate how the equilibration behavior captured by Eqs.~\eqref{eq:main-result} and~\eqref{eq:main-result-envfree} emerges as the microscopic system grows. Second, to monitor the residual effective coherences resolved by \(\Lambda_{\Delta E}\). In particular, we will see that \(\langle D(\rho_S(t),\omega_S)\rangle_t\) becomes small and that the upper bounds become increasingly tight for relatively small finite systems. Meanwhile, the effective coherences that remain resolved after coarse-graining are strongly suppressed by the microscopic unitary dynamics.

We consider a system of \(N\) spin-\(1/2\) particles and choose as initial state a maximally coherent superposition of all energy eigenstates, \(\ket{\psi(0)}=d_U^{-1/2}\sum_{n=1}^{d_U}\ket{E_n}\), where \(d_U=2^N\). This choice gives an initial effective state sufficiently far from its time-averaged state, so that the decay of \(D(\rho_S(t),\omega_S)\) can be meaningfully followed. At the same time, because the coefficients have equal amplitudes and aligned phases, applying \(\Lambda_{\Delta E}\) produces appreciable off-diagonal terms between effective energy levels at \(t=0\). This prevents any later suppression of these terms from being a trivial consequence of an initially incoherent effective state.

For the microscopic dynamics, we use a one-dimensional Ising chain in transverse and longitudinal fields, with open boundary conditions,
\begin{align}
H
=
g\sum_{\ell=1}^{N}\sigma_x^{(\ell)}
+
h\sum_{\ell=1}^{N}\sigma_z^{(\ell)}
+
J\sum_{\ell=1}^{N-1}\sigma_z^{(\ell)}\sigma_z^{(\ell+1)} .
\end{align}
Here \(\sigma_x^{(\ell)}\) and \(\sigma_z^{(\ell)}\) are Pauli operators acting on site \(\ell\), and we set \(J=1\) and \(g=h=0.8\).With both transverse and longitudinal fields, this model provides a standard
finite-size spin-chain setting for studying statistical behavior in deterministic
quantum dynamics~\cite{jensen1985}. We use it here only as a
representative many-body model in which equilibration effects are readily
visible. The dynamics are obtained by exact diagonalization and implemented
using QuTiP~\cite{qutip}. The results are shown in Fig.~\ref{fig:energy-window-numerics} for finite chains with \(N=2,\ldots,14\) spins. For each \(N\), the spectrum is partitioned into \(K=N\) resolved energy windows, so that \(\Delta E=(E_{\max}-E_{\min})/N\).

The left panel illustrates the equilibration behavior. The main plot shows the time dependence of \(D(\rho_S(t),\omega_S)\) for representative system sizes. For the smallest chains, the trace distance still exhibits pronounced temporal fluctuations and no clear stationary regime. As \(N\) increases, these fluctuations are strongly reduced, and the effective state remains close to its time-averaged state for most times. The inset shows the corresponding time average \(\langle D(\rho_S(t),\omega_S)\rangle_t\) together with the two upper bounds in Eqs.~\eqref{eq:main-result} and~\eqref{eq:main-result-envfree}. It shows that the average trace distance decreases with increasing system size and that the bounds become increasingly tight for the finite systems considered here.

We now turn to the energetic coherences that remain accessible in the generalized subsystem. Motivated by the operational question of whether initially detectable macroscopic superpositions remain accessible under unitary dynamics~\cite{carvalho2025}, we ask an analogous question here for the effective coherences resolved by the finite-resolution generalized subsystem: do the residual coherences between effective energy levels persist for most times? To quantify the off-diagonal energy sector, we use the \(l_1\)-norm coherence measure~\cite{Baumgratz2014},
\begin{align}
    C_{l_1}(\rho_S)
    =
    \sum_{i\neq j}|(\rho_S)_{ij}|.
\end{align}
For the initial state chosen above, the equal amplitudes and aligned phases make the off-diagonal contributions add constructively after applying \(\Lambda_{\Delta E}\). Indeed, for \(\ket{\psi(0)}=d_U^{-1/2}\sum_{j,\mu}\ket{j,\mu}\), one has \((\rho_S(0))_{ij}=\alpha_{ij}d_i d_j/d_U\), for \(i\neq j\). Taking the maximal CPTP-compatible value \(|\alpha_{ij}|=1/\sqrt{d_i d_j}\), this gives \(C_{l_1}(\rho_S(0))=d_U^{-1}\sum_{i\neq j}\sqrt{d_i d_j}\). For approximately balanced windows, \(d_i\simeq d_U/K\), and therefore \(C_{l_1}(\rho_S(0))\simeq K-1\). Since in our example \(K=N\), the initially accessible effective coherence grows approximately linearly with \(N\), so the finite-resolution description contains an increasingly large coherent sector that is, in principle, accessible at \(t=0\).

The right panel shows how this initially accessible coherences behave under the subsequent dynamics. The main plot displays the time dependence of \(C_{l_1}(\rho_S(t))\) for representative system sizes. For smaller \(N\), a residual coherent contribution survives with temporal fluctuations. As \(N\) increases, however, the resolved coherence is rapidly suppressed and remains close to zero for most times. The inset shows the normalized time average \(\langle C_{l_1}(\rho_S(t))\rangle_t/C_{l_1}(\rho_S(0))\), that is, the fraction of the initially resolved effective coherence that survives along the dynamics. Its decrease with \(N\) shows that, even when the finite-resolution description initially retains a substantial amount of coherences, equilibration strongly suppresses its long-time contribution.

The numerical example therefore illustrates both aspects of the analysis: the finite-resolution generalized subsystem equilibrates in trace distance, while the same dynamics suppresses the residual coherent sector resolved by the channel. Thus, effective coherences that are initially resolved after coarse-graining need not remain operationally accessible for most times under the microscopic unitary dynamics.

\section{Conclusion}

Equilibration in isolated quantum systems is not a statement about the microscopic state becoming stationary. The microscopic state continues to evolve unitarily, but an observer usually has access only to a restricted effective description. In this work, we formulated this idea at the level of states by describing the accessible description as the output of a quantum channel. This defines a generalized subsystem and allows ordinary subsystems, finite families of POVMs, and coarse-grained descriptions to be treated within the same framework.

The resulting bounds show that equilibration is controlled by the dimension \(d_S\) of the generalized subsystem relative to the effective dimension explored by the microscopic dynamics: more sharply, through the complementary output \(\Lambda_c[\omega]\), and more generally, through the full time-averaged state \(\omega\). In the usual subsystem case, this information is carried by the environment. The sharper bound recovers the standard system--environment result, while the looser bound provides a sufficient criterion depending only on \(d_S\) and \(d_{\mathrm{eff}}(\omega)\). Thus, the familiar system--environment intuition survives beyond tensor-product subsystems: generalized subsystems equilibrate whenever their dimension is small compared with the relevant effective dimension generated by the microscopic dynamics.

The finite-resolution energy channel illustrates this mechanism in a physically motivated setting. A detector with finite spectral resolution does not resolve individual microscopic levels, but only energy windows. The unresolved multiplicities both restrict the coherences that can remain visible between effective energy levels and strengthen equilibration by increasing the effective dimension of the complementary output. Our numerical example shows this operationally: even coherences deliberately made visible at the effective level are suppressed under the unitary dynamics as the generalized subsystem equilibrates. In this sense, coherence need not disappear from the microscopic state in order to become inaccessible to the effective description.

\begin{acknowledgments}
This work is supported by the National Council for Scientific and Technological Development, CNPq Brazil (project: Universal Grant No. 408990/2025-2). A. D. Varizi acknowledges funding from the Air Force Office of Scientific Research under Grant No. FA9550-23-1-0092.
\end{acknowledgments}

\bibliographystyle{apsrev4-1}	
\bibliography{ref} 

\appendix

\section{Proofs of the equilibration bounds} 
\label{ap:mainresults}

\subsection{Bound in terms of the complementary output}

We first relate the trace distance to the Hilbert-Schmidt distance. For two states on a Hilbert space of dimension \(d_S\), we use the standard inequality~\cite{fuchs1999cryptographic}
\begin{align}
D(\rho_1,\rho_2)
=
\frac{1}{2}\|\rho_1-\rho_2\|_1
\leq
\frac{1}{2}
\sqrt{
d_S\,\tr\left[(\rho_1-\rho_2)^2\right]
}.
\label{norm1to2}
\end{align}
Applying this to \(\rho_1=\rho_S(t)\) and \(\rho_2=\omega_S\), taking the time average, and using Jensen's inequality for the concave function \(f(x)=\sqrt{x}\), we obtain
\begin{align}
\left\langle D(\rho_S(t),\omega_S)\right\rangle_t
&\leq
\frac{1}{2}\sqrt{d_S}\,
\left\langle
\sqrt{
\tr\left[(\rho_S(t)-\omega_S)^2\right]
}
\right\rangle_t
\nonumber\\
&\leq
\frac{1}{2}
\sqrt{
d_S
\left\langle
\tr\left[(\rho_S(t)-\omega_S)^2\right]
\right\rangle_t
}.
\label{eq:a2}
\end{align}

We now evaluate the time-averaged Hilbert-Schmidt term. From Eqs.~\eqref{eq:gsub} and~\eqref{eq:gsubeq},
\begin{align}
\rho_S(t)-\omega_S
=
\sum_{j\neq k}
c_jc_k^\ast e^{-i(E_j-E_k)t}
\Lambda\left(\ket{E_j}\bra{E_k}\right).
\label{eq:appendix-rhoS-minus-omegaS}
\end{align}
Therefore,
\begin{align}
&
\left\langle
\tr\left[(\rho_S(t)-\omega_S)^2\right]
\right\rangle_t
\nonumber\\
&=
\sum_{j\neq k}
\sum_{m\neq n}
c_jc_k^\ast c_m c_n^\ast
\left\langle
e^{-i(E_j-E_k+E_m-E_n)t}
\right\rangle_t
\nonumber\\
&\quad\times
\tr\left[
\Lambda\left(\ket{E_j}\bra{E_k}\right)
\Lambda\left(\ket{E_m}\bra{E_n}\right)
\right].
\label{eq:appendix-hs-expanded}
\end{align}
Using \(\left\langle f(t)\right\rangle_t
:=
\lim_{\tau\to\infty}
\frac{1}{\tau}
\int_0^\tau f(t)\,dt\), together with the non-degenerate-gap assumption, and since \(j\neq k\) and \(m\neq n\), the only nonzero contributions occur for \(j=n\) and \(k=m\). Hence
\begin{align}
&
\left\langle
\tr\left[(\rho_S(t)-\omega_S)^2\right]
\right\rangle_t
\nonumber\\
&=
\sum_{j\neq k}
|c_j|^2|c_k|^2
\tr\left[
\Lambda\left(\ket{E_j}\bra{E_k}\right)
\Lambda\left(\ket{E_k}\bra{E_j}\right)
\right].
\label{eq:appendix-hs-after-time-average}
\end{align}
Extending the sum to all \(j,k\) only adds nonnegative diagonal terms. Therefore,
\begin{align}
&
\left\langle
\tr\left[(\rho_S(t)-\omega_S)^2\right]
\right\rangle_t
\nonumber\\
&\leq
\sum_{j,k}
|c_j|^2|c_k|^2
\tr\left[
\Lambda\left(\ket{E_j}\bra{E_k}\right)
\Lambda\left(\ket{E_k}\bra{E_j}\right)
\right].
\label{eq:appendix-hs-extended-sum}
\end{align}

We now express the trace in Eq.~\eqref{eq:appendix-hs-extended-sum} in terms of the complementary channel. Let
\begin{align}
\Lambda[X]
=
\tr_{\rm aux}\left[VXV^\dagger\right],
\qquad
\Lambda_c[X]
=
\tr_S\left[VXV^\dagger\right]
\end{align}
be a Stinespring representation of \(\Lambda\) and its associated complementary channel. Defining \(\ket{\widetilde E_j}=V\ket{E_j}\), we have
\begin{align}
&
\tr\left[
\Lambda\left(\ket{E_j}\bra{E_k}\right)
\Lambda\left(\ket{E_k}\bra{E_j}\right)
\right]
\nonumber\\
&\qquad=
\tr\left[
\tr_{\rm aux}
\left(\ket{\widetilde E_j}\bra{\widetilde E_k}\right)
\tr_{\rm aux}
\left(\ket{\widetilde E_k}\bra{\widetilde E_j}\right)
\right].
\end{align}
For bipartite vectors \(\ket{a},\ket{b}\in\mathcal{H}_S\otimes\mathcal{H}_{\rm aux}\), one has
\begin{align}
&
\tr\left[
\tr_{\rm aux}\left(\ket{a}\bra{b}\right)
\tr_{\rm aux}\left(\ket{b}\bra{a}\right)
\right]
\nonumber\\
&\qquad=
\tr\left[
\tr_S\left(\ket{a}\bra{a}\right)
\tr_S\left(\ket{b}\bra{b}\right)
\right].
\label{eq:appendix-trace-identity}
\end{align}
Applying this identity to \(\ket{a}=\ket{\widetilde E_j}\) and \(\ket{b}=\ket{\widetilde E_k}\), we obtain
\begin{align}
&
\tr\left[
\Lambda\left(\ket{E_j}\bra{E_k}\right)
\Lambda\left(\ket{E_k}\bra{E_j}\right)
\right]
\nonumber\\
&\qquad=
\tr\left[
\Lambda_c\left(\ket{E_j}\bra{E_j}\right)
\Lambda_c\left(\ket{E_k}\bra{E_k}\right)
\right].
\label{eq:app-complementary-trace}
\end{align}
Substituting Eq.~\eqref{eq:app-complementary-trace} into Eq.~\eqref{eq:appendix-hs-extended-sum}, we find
\begin{align}
&
\left\langle
\tr\left[(\rho_S(t)-\omega_S)^2\right]
\right\rangle_t
\nonumber\\
&\leq
\sum_{j,k}
|c_j|^2|c_k|^2
\tr\left[
\Lambda_c\left(\ket{E_j}\bra{E_j}\right)
\Lambda_c\left(\ket{E_k}\bra{E_k}\right)
\right]
\nonumber\\
&=
\tr\left[
\Lambda_c\left(
\sum_j |c_j|^2\ket{E_j}\bra{E_j}
\right)
\Lambda_c\left(
\sum_k |c_k|^2\ket{E_k}\bra{E_k}
\right)
\right]
\nonumber\\
&=
\tr\left[\Lambda_c[\omega]^2\right]
=
\frac{1}{d_{\rm eff}(\Lambda_c[\omega])}.
\label{eq:appendix-hs-bound-complementary}
\end{align}
Finally, inserting Eq.~\eqref{eq:appendix-hs-bound-complementary} into Eq.~\eqref{eq:a2}, we obtain
\begin{align}
\left\langle D(\rho_S(t),\omega_S)\right\rangle_t
\leq
\frac{1}{2}
\sqrt{
\frac{d_S}{d_{\rm eff}(\Lambda_c[\omega])}
}.
\label{eq:appendix-main-bound}
\end{align}

\subsection{Bound in terms of the microscopic effective dimension}

We now derive a weaker bound depending only on the microscopic time-averaged state \(\omega\). The idea is to relate the purity of the complementary output to the purity of \(\omega\). We use the Rényi entropy
\[
S_\alpha(\rho)=(1-\alpha)^{-1}\ln\tr[\rho^\alpha],
\]
and the weak subadditivity relation: for a bipartite state \(\rho_{AB}\), and for all \(\alpha\)~\cite{van2002},
\begin{equation}
S_\alpha(\rho_A)-S_0(\rho_B)
\le
S_\alpha(\rho_{AB})
\le
S_\alpha(\rho_A)+S_0(\rho_B),
\label{eq:weak-subadditivity-renyi}
\end{equation}
where \(S_0(\rho)=\ln\mathrm{rank}(\rho)\).

Let \(V\) be the Stinespring isometry associated with \(\Lambda\), and define
\[
\omega_{\rm aux}=\tr_S[V\omega V^\dagger],
\qquad
\omega_S=\tr_{\rm aux}[V\omega V^\dagger].
\]
Thus \(\omega_{\rm aux}=\Lambda_c[\omega]\). Applying Eq.~\eqref{eq:weak-subadditivity-renyi} with \(\alpha=2\), \(A={\rm aux}\), and \(B=S\), we obtain
\begin{align}
\tr[(V\omega V^\dagger)^2]
&=
\exp[-S_2(V\omega V^\dagger)]
\nonumber\\
&\ge
\exp[-S_2(\omega_{\rm aux})-S_0(\omega_S)]
\nonumber\\
&=
\frac{\exp[-S_2(\omega_{\rm aux})]}
{\exp[S_0(\omega_S)]}
\nonumber\\
&=
\frac{\tr[\omega_{\rm aux}^2]}{\mathrm{rank}(\omega_S)}
\ge
\frac{\tr[\omega_{\rm aux}^2]}{d_S}.
\label{eq:purity-bound-renyi}
\end{align}
The first inequality uses
\(S_2(V\omega V^\dagger)\le S_2(\omega_{\rm aux})+S_0(\omega_S)\), while the last one uses \(\mathrm{rank}(\omega_S)\le d_S\).

Since \(V\) is an isometry, \(V^\dagger V=I\), and therefore
\begin{equation}
\tr[(V\omega V^\dagger)^2]
=
\tr[V\omega V^\dagger V\omega V^\dagger]
=
\tr[\omega^2].
\label{eq:isometry-purity}
\end{equation}
Combining Eqs.~\eqref{eq:purity-bound-renyi} and~\eqref{eq:isometry-purity}, we get
\begin{align}
\frac{1}{d_{\rm eff}(\omega_{\rm aux})}
=
\tr[\omega_{\rm aux}^2]
&\le
d_S\,\tr[(V\omega V^\dagger)^2]
\nonumber\\
&=
d_S\,\tr[\omega^2]
=
\frac{d_S}{d_{\rm eff}(\omega)}.
\label{eq:deff-aux-bound}
\end{align}
Equivalently,
\[
d_{\rm eff}(\omega_{\rm aux})
\ge
\frac{d_{\rm eff}(\omega)}{d_S}.
\]
Combining this with the previous bound, and using
\(\omega_{\rm aux}=\Lambda_c[\omega]\), we arrive at
\begin{equation}
\left\langle D(\rho_S(t),\omega_S)\right\rangle_t
\le
\frac{1}{2}
\sqrt{
\frac{d_S}{d_{\rm eff}(\Lambda_c[\omega])}
}
\le
\frac{1}{2}
\sqrt{
\frac{d_S^2}{d_{\rm eff}(\omega)}
}.
\label{eq:appendix-microscopic-effective-dimension-bound}
\end{equation}

\section{Complete positivity of the finite-resolution energy channel}
\label{ap:finite-resolution-cptp}

In this appendix we verify that the finite-resolution energy channel \(\Lambda_{\Delta E}\) is CPTP and derive the constraint imposed by complete positivity on the coefficients \(\alpha_{ij}\).

Trace preservation follows directly from the microscopic action of
\(\Lambda_{\Delta E}\). From
Eqs.~\eqref{eq:energy-window-map:diag} and
\eqref{eq:energy-window-map:off},
\begin{equation}
\tr\left[
\Lambda_{\Delta E}\left(\ket{i,\mu}\bra{j,\nu}\right)
\right]
=
\delta_{ij}\delta_{\mu\nu}
=
\tr\left[
\ket{i,\mu}\bra{j,\nu}
\right].
\end{equation}
By linearity,
\(\tr[\Lambda_{\Delta E}(X)]=\tr[X]\) for every operator \(X\).

The nontrivial part is to prove complete positivity. We do this through the Choi matrix, whose only nontrivial positivity condition reduces to the positivity of a finite \(K\times K\) matrix \(G\) defined by
\begin{equation}
G_{ii}=1,
\qquad
G_{ij}=\alpha_{ij}\sqrt{d_i d_j}
\quad (i\neq j).
\label{eq:G-main-result}
\end{equation}
Thus,
\begin{equation}
\Lambda_{\Delta E}\ \text{is CPTP}
\qquad\Longleftrightarrow\qquad
G\succeq0.
\label{eq:CPTP-iff-G}
\end{equation}
As an immediate pairwise consequence, complete positivity implies
\begin{equation}
|\alpha_{ij}|
\le
\frac{1}{\sqrt{d_i d_j}},
\qquad i\neq j,
\label{eq:alphaij-bound}
\end{equation}
which is the constraint quoted in the main text.

The proof proceeds by writing the Choi matrix in a form that separates two orthogonal sectors. One sector is manifestly positive, while the other contains all the coefficients \(\alpha_{ij}\). Positivity of this second sector is exactly the condition \(G\succeq0\).

\subsection{Choi matrix}

Let us now consider complete positivity. Let
\begin{equation}
J(\Lambda_{\Delta E})
=
\sum_{x,y}
\Lambda_{\Delta E}(\ket{x}\bra{y})
\otimes
\ket{x}\bra{y},
\end{equation}
with input basis \(\{\ket{x}\}\equiv\{\ket{i,\mu}\}\) and output basis \(\{\ket{i_\Lambda}\}\). Using the defining action of \(\Lambda_{\Delta E}\), we obtain
\begin{align}
J(\Lambda_{\Delta E})
&=
\sum_{i=0}^{K-1}
\ket{i_\Lambda}\bra{i_\Lambda}
\otimes
\sum_{\mu=1}^{d_i}
\ket{i,\mu}\bra{i,\mu}
\nonumber\\
&\quad+
\sum_{\substack{i,j=0\\ i\neq j}}^{K-1}
\alpha_{ij}
\ket{i_\Lambda}\bra{j_\Lambda}
\otimes
\sum_{\mu=1}^{d_i}
\sum_{\nu=1}^{d_j}
\ket{i,\mu}\bra{j,\nu}.
\label{eq:Choi:raw:DeltaE}
\end{align}
Every term has the matched form \(\ket{i_\Lambda}\bra{j_\Lambda}
\otimes
\ket{i,\mu}\bra{j,\nu}\). Hence \(J(\Lambda_{\Delta E})\) acts nontrivially only on
\begin{equation}
\mathcal R
=
\bigoplus_{i=0}^{K-1}
\left(
\ket{i_\Lambda}
\otimes
\mathrm{span}\{\ket{i,\mu}\}_{\mu=1}^{d_i}
\right)
\subset
\mathcal H_S\otimes\mathcal H_U .
\label{eq:R}
\end{equation}
On \(\mathcal R^\perp\), the Choi matrix acts as zero, so positivity can be checked on \(\mathcal R\).

\subsection{Orthogonal-sector decomposition of the Choi matrix}

For each window \(i\), define the normalized collective vector
\begin{equation}
\ket{\phi_i}
=
\frac{1}{\sqrt{d_i}}
\sum_{\mu=1}^{d_i}\ket{i,\mu}.
\end{equation}
The uniform microscopic sums in Eq.~\eqref{eq:Choi:raw:DeltaE} select this vector, since
\begin{equation}
\sum_{\mu=1}^{d_i}\ket{i,\mu}\bra{i,\mu}=I_{d_i},
\qquad
\sum_{\mu=1}^{d_i}
\sum_{\nu=1}^{d_j}
\ket{i,\mu}\bra{j,\nu}
=
\sqrt{d_i d_j}\,
\ket{\phi_i}\bra{\phi_j}.
\end{equation}
Therefore,
\begin{align}
J(\Lambda_{\Delta E})
&=
\sum_{i=0}^{K-1}
\ket{i_\Lambda}\bra{i_\Lambda}\otimes I_{d_i}
\nonumber\\
&\quad+
\sum_{\substack{i,j=0\\ i\neq j}}^{K-1}
\alpha_{ij}\sqrt{d_i d_j}\,
\ket{i_\Lambda}\bra{j_\Lambda}
\otimes
\ket{\phi_i}\bra{\phi_j}.
\label{eq:Choi:bright-before-splitting}
\end{align}

Now write
\(I_{d_i}
=
\left(I_{d_i}-\ket{\phi_i}\bra{\phi_i}\right)
+
\ket{\phi_i}\bra{\phi_i}
\). Defining \(\Pi_i^\perp=I_{d_i}-\ket{\phi_i}\bra{\phi_i}\) and introducing \(G\) as in Eq.~\eqref{eq:G-main-result}, the Choi matrix becomes
\begin{align}
J(\Lambda_{\Delta E})
&=
\sum_i
\ket{i_\Lambda}\bra{i_\Lambda}
\otimes
\Pi_i^\perp
+
\sum_{i,j}
G_{ij}\,
\ket{i_\Lambda}\bra{j_\Lambda}
\otimes
\ket{\phi_i}\bra{\phi_j}.
\label{eq:Choi:decomp:DeltaE}
\end{align}

This decomposition is orthogonal on \eqref{eq:R}. Indeed,
\[
\mathrm{span}\{\ket{i,\mu}\}_{\mu=1}^{d_i}
=
\mathrm{span}\{\ket{\phi_i}\}
\oplus
\mathrm{span}\{\ket{\phi_i}\}^{\perp}.
\]
The first term in Eq.~\eqref{eq:Choi:decomp:DeltaE} acts on the orthogonal
components and is manifestly positive, since it is a sum of projectors. The
second term acts on the collective subspace \(\mathrm{span}\{
\ket{i_\Lambda}\otimes\ket{\phi_i}
\}_{i=0}^{K-1},
\) and is the only part containing the inter-window coherence coefficients
\(\alpha_{ij}\). Hence all nontrivial complete-positivity constraints are
encoded in \(G\).

\subsection{Complete-positivity condition}

\begin{proposition}
\label{prop:CP-G-PSD}
The Choi matrix satisfies \(J(\Lambda_{\Delta E})\succeq0\) if and only if \(G\succeq0\).
\end{proposition}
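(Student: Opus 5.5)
The plan is to use the decomposition \eqref{eq:Choi:decomp:DeltaE} directly, writing $J(\Lambda_{\Delta E})=P+B$. Here
\[
P=\sum_i\ket{i_\Lambda}\bra{i_\Lambda}\otimes\Pi_i^\perp,\qquad B=\sum_{i,j}G_{ij}\ket{i_\Lambda}\bra{j_\Lambda}\otimes\ket{\phi_i}\bra{\phi_j}.
\]
I will introduce the orthonormal vectors $\ket{b_i}:=\ket{i_\Lambda}\otimes\ket{\phi_i}$. They are orthonormal because the $\ket{i_\Lambda}$ are orthonormal and each $\ket{\phi_i}$ is normalized. With them, $B=\sum_{i,j}G_{ij}\ket{b_i}\bra{b_j}$, so $B$ is just $G$ expressed in the orthonormal family $\{\ket{b_i}\}$, extended by zero on the complement of $\mathcal B:=\mathrm{span}\{\ket{b_i}\}$.

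The key observation is that $P$ and $B$ act on mutually orthogonal subspaces. The range of $P$ is $\mathcal P:=\bigoplus_i\ket{i_\Lambda}\otimes\mathrm{span}\{\ket{\phi_i}\}^\perp$, the perp being taken inside $\mathrm{span}\{\ket{i,\mu}\}_\mu$. Orthogonality $\mathcal P\perp\mathcal B$ follows blockwise: for equal labels $i$ it holds because $\Pi_i^\perp\ket{\phi_i}=0$, and for distinct labels because $\braket{i_\Lambda}{j_\Lambda}=0$. Hence $P\,B=B\,P=0$. Both operators vanish on $\mathcal R^\perp$ and on the respective other sector. This gives the block-diagonal form
\[
J(\Lambda_{\Delta E})=P\oplus B\oplus 0\quad\text{on}\quad \mathcal P\oplus\mathcal B\oplus(\mathcal P\oplus\mathcal B)^\perp,
\]
where, by \eqref{eq:R}, $\mathcal R=\mathcal P\oplus\mathcal B$.

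For the forward direction ($\Rightarrow$), assume $J\succeq0$. For any $x\in\mathbb C^K$ set $\ket{v}=\sum_i x_i\ket{b_i}$. Then $\bra{v}J\ket{v}=\bra{v}B\ket{v}=x^\dagger G x\ge0$, since $P\ket{v}=0$. This is the step that uses orthogonality essentially. For the reverse direction ($\Leftarrow$), assume $G\succeq0$. An arbitrary vector decomposes as $\ket{w}=\ket{w_P}+\ket{w_B}+\ket{w_\perp}$, and the cross terms vanish by the block structure. This gives
\[
\bra{w}J\ket{w}=\bra{w_P}P\ket{w_P}+\bra{w_B}B\ket{w_B}\ge0.
\]
The first term is nonnegative because $P$ is a sum of tensor products of projectors, hence a projector. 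The second equals $y^\dagger G y\ge0$ with $y_i=\braket{b_i}{w_B}$.

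The only step needing care is the orthogonality bookkeeping, namely that $B$ has no matrix elements linking $\mathcal B$ to $\mathcal P$ or to $\mathcal R^\perp$. Everything else is immediate. The pairwise bound \eqref{eq:alphaij-bound} then follows from the nonnegativity of the $2\times2$ principal minors of $G$: $1-|\alpha_{ij}|^2d_id_j\ge0$.
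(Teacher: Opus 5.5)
Your proposal is correct and follows essentially the paper's argument. Both proofs start from the decomposition \eqref{eq:Choi:decomp:DeltaE} and use the orthogonality of the projector sector $\sum_i\ket{i_\Lambda}\bra{i_\Lambda}\otimes\Pi_i^\perp$ and the collective sector spanned by $\ket{i_\Lambda}\otimes\ket{\phi_i}$, so the quadratic form splits into a nonnegative part plus $s^\dagger G s$; the converse then comes from testing on collective vectors. The only difference is presentational: you use the block form $J=P\oplus B\oplus 0$, while the paper works with vectors in $\mathcal R$ and splits each $\ket{x_i}$.
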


\begin{proof}
Since \(J(\Lambda_{\Delta E})\) vanishes on \(\mathcal R^\perp\), it is enough to consider vectors in \(\mathcal R\). Let
\[
\ket{\psi}
=
\sum_{i=0}^{K-1}
\ket{i_\Lambda}\otimes\ket{x_i},
\qquad
\ket{x_i}\in\mathrm{span}\{\ket{i,\mu}\}_{\mu=1}^{d_i}.
\]
Decompose
\[
\ket{x_i}
=
\ket{x_i^\perp}
+
s_i\ket{\phi_i},
\qquad
\ket{x_i^\perp}\perp\ket{\phi_i},
\qquad
s_i=\braket{\phi_i}{x_i}.
\]
Using Eq.~\eqref{eq:Choi:decomp:DeltaE}, we find
\begin{align}
\bra{\psi}J(\Lambda_{\Delta E})\ket{\psi}
&=
\sum_i
\|\ket{x_i^\perp}\|^2
+
\sum_{i,j}
G_{ij}s_i^\ast s_j
\nonumber\\
&=
\sum_i
\|\ket{x_i^\perp}\|^2
+
s^\dagger Gs .
\label{eq:final-quad-form}
\end{align}
If \(G\succeq0\), then the right-hand side is nonnegative for all \(\ket{\psi}\), hence \(J(\Lambda_{\Delta E})\succeq0\). Conversely, if \(J(\Lambda_{\Delta E})\succeq0\), choosing \(\ket{x_i}=s_i\ket{\phi_i}\) gives
\[
\bra{\psi}J(\Lambda_{\Delta E})\ket{\psi}
=
s^\dagger Gs
\ge0
\]
for all \(s\in\mathbb C^K\). Therefore \(G\succeq0\).
\end{proof}

\subsection{Pairwise bound and useful choices}

The condition \(G\succeq0\) is the full complete-positivity constraint on the coefficients \(\alpha_{ij}\). A simple necessary consequence follows from each \(2\times2\) principal submatrix. For \(i\neq j\),
\[
G^{(ij)}
=
\begin{pmatrix}
1 & \alpha_{ij}\sqrt{d_i d_j}\\
\alpha_{ij}^\ast\sqrt{d_i d_j} & 1
\end{pmatrix}.
\]
Positivity of this block requires
\[
\det G^{(ij)}
=
1-|\alpha_{ij}|^2d_i d_j
\ge0,
\]
and therefore
\begin{equation}
|\alpha_{ij}|
\le
\frac{1}{\sqrt{d_i d_j}},
\qquad
i\neq j.
\label{eq:pairwise-alpha-bound}
\end{equation}
Higher-order principal minors may impose additional constraints on arbitrary choices of the \(\alpha_{ij}\), but Eq.~\eqref{eq:pairwise-alpha-bound} gives the sharp upper bound on each individual coefficient.

A useful compatible family is
\begin{equation}
\alpha_{ij}
=
\frac{x e^{i(\theta_i-\theta_j)}}{\sqrt{d_i d_j}},
\qquad
0\le x\le1,
\qquad
i\neq j.
\label{eq:alpha-x-family}
\end{equation}
For this family,
\[
G_{ii}=1,
\qquad
G_{ij}=x e^{i(\theta_i-\theta_j)}
\quad(i\neq j).
\]
Equivalently, with \(v_i=e^{i\theta_i}\),
\begin{equation}
G(x)
=
(1-x)I_K+xvv^\dagger.
\label{eq:Gx-decomposition}
\end{equation}
Thus, for any \(u\in\mathbb C^K\),
\begin{equation}
u^\dagger G(x)u
=
(1-x)\|u\|^2
+
x|v^\dagger u|^2
\ge0.
\end{equation}
Hence this whole family is compatible with complete positivity.

The limiting case \(x=0\) gives \(\alpha_{ij}=0\) and \(G=I_K\), corresponding to a population-only coarse-graining. The limiting case \(x=1\) gives the saturating coherent choice
\begin{equation}
\alpha_{ij}
=
\frac{e^{i(\theta_i-\theta_j)}}{\sqrt{d_i d_j}},
\qquad
i\neq j,
\label{eq:saturating_alpha}
\end{equation}
for which \(G=vv^\dagger\succeq0\). Thus both the population-only and saturating coherent cases are CPTP.
\end{document}